\newtheorem{theorem}{Theorem}
\newtheorem{remark}{Remark}
\newtheorem{problem}{Problem}
\newtheorem{definition}{Definition}
\newtheorem{proposition}{Proposition}
\newenvironment{myitemize}{\begin{list}{$\bullet$}
{\setlength{\topsep}{1mm}
\setlength{\itemsep}{0.25mm}
\setlength{\parsep}{0.25mm}
\setlength{\itemindent}{0mm}
\setlength{\partopsep}{0mm}
\setlength{\labelwidth}{15mm}
\setlength{\leftmargin}{4mm}}}{\end{list}}
\begin{document}
\title{
Opportunistic Intermittent Control with Safety Guarantees \\for Autonomous Systems 
\thanks{This work is supported by the National Science Foundation awards 1834701, 1834324, 1839511, 1724341, and 1646497. It is also funded in part by the DARPA BRASS program under agreement number FA8750-16-C-0043 and ONR grant N00014-19-1-2496.}}

\author{
    \IEEEauthorblockN{Chao Huang\IEEEauthorrefmark{1}, Shichao Xu\IEEEauthorrefmark{1}, Zhilu Wang\IEEEauthorrefmark{1}, Shuyue Lan\IEEEauthorrefmark{1}, Wenchao Li\IEEEauthorrefmark{2}, Qi Zhu\IEEEauthorrefmark{1}}
    \IEEEauthorblockA{\IEEEauthorrefmark{1}Northwestern University, 
    \\ \{chao.huang, qzhu\}@northwestern.edu, \{ShiChaoXu2023, ZhiluWang2018, ShuyueLan2018\}@u.northwestern.edu}
    \IEEEauthorblockA{\IEEEauthorrefmark{2}Boston University, wenchao@bu.edu }
}

\maketitle

\begin{abstract}
Control schemes for autonomous systems are often designed in a way that anticipates the worst case in any situation.
At runtime, however, there could exist opportunities to leverage the characteristics of specific environment and operation context for more efficient control. 
In this work, we develop an online intermittent-control framework that combines formal verification with model-based optimization and deep reinforcement learning to opportunistically \emph{skip} certain control computation and actuation to save actuation energy and computational resources without compromising system safety. 
Experiments on an adaptive cruise control system demonstrate that our approach can achieve significant energy and computation savings.

\end{abstract}

\begin{IEEEkeywords}
    opportunistic intermittent control, safety guarantee, formal methods, robust control invariant, safe RL, energy saving
\end{IEEEkeywords}

\section{Introduction}

For safety-critical autonomous systems such as robots and automated vehicles, control schemes are often designed conservatively so that system safety can be maintained in a wide variety of situations~\cite{chisci2001systems,richards2005robust,lofberg2003minimax}. 
During the operation of these systems, however, such schemes can be overly conservative and result in unnecessary resource and/or energy consumption. 
This paper first makes the observation that certain control steps, even if they are skipped, do not impact either the performance or safety of the overall system. Armed with this observation, 
we propose an online scheme that opportunistically skips control computation and the corresponding actuation steps 
by learning specific characteristics of the system's operating environment. 
We further show that safety could be maintained with this more efficient control scheme.

Consider the example of an adaptive cruise control (ACC) system, in which an ego vehicle automatically adjusts its speed to  maintain a safe distance from the vehicle in front. To ensure the safety across a variety of situations (e.g. an aggressive front vehicle vs. a conservative front vehicle), the ego vehicle may adopt a safe control scheme such as one that based on robust model predictive control (RMPC) \cite{chisci2001systems}. 
At each control step, the RMPC calculates the actuation signal based on the two vehicles' speeds and their relative distance. However, it does not make any prediction on the front vehicle's intent. 
In practical scenarios, the front vehicle may exhibit certain behavior patterns, e.g., an aggressive driver that accelerates and decelerates frequently, or stop-and-go in a traffic jam. 
We argue that we can design more computation/energy-efficient control schemes by learning and exploiting these patterns. 
The key is how to learn these patterns quickly and how to guarantee system safety when certain control steps are skipped. 

In this work, we consider systems with an existing safe controller, and develop a novel online intermittent-control framework to opportunistically skip the computation and actuation of the underlying controller by leveraging the characteristics of specific operation context and environment. 
For instance, at each control step of the above ACC example, our method will decide whether to run the underlying RMPC and apply its actuation/control input, or simply apply a zero control input. Such opportunistic skipping could help save computational resources for running the underlying control algorithm and save actuation energy by applying zero control inputs. 
To \emph{achieve both safety and efficiency, our method addresses two key challenges}: 1) How to ensure the system safety when zero input is applied at some control steps? 2) How to effectively leverage the characteristics of specific operation context and environment? 

For the first challenge, our framework uses formal analysis to guarantee the system safety under skipping of controls. 
Specifically, we first compute a \emph{strengthened safe set} based on the notion of \emph{robust control invariant} and \emph{backward reachable set} of the underlying safe controller. Intuitively, the strengthened safety set represents the states at which the system can accept any control input at the current step and be able to stay within safe states, with the underlying safe controller applying input from the next step on.  
We then develop a monitor to check whether the system is within such strengthened safe set at each control step. Whenever it is found that the system state is out of the strengthened safe set, the monitor will require the system to apply the underlying safe controller for guaranteeing system safety.

For the second challenge, we develop two approaches to leverage the characteristics of operation context and environment when the system is within the strengthened safe set, depending on the type of the underlying safe controller and whether the characteristics are known explicitly. In the simpler case where the safe controller has an analytic expression and the characteristics can be explicitly captured, 
we use a model-based approach to decide the skipping choices by solving a mixed integer programming (MIP) program. Otherwise, we use
a deep reinforcement learning (DRL) approach to learn the mapping from the current state and the historical characteristics to the skipping choices, which implicitly reflects the impact of specific operation context and environment.


\smallskip
\noindent
\textbf{Related work:} Our work is related to the rich literature on weakly-hard systems and fault-tolerant control systems. In weakly-hard systems, occasional deadline misses are allowed for control computation in a bounded manner, e.g., the typical $(m,K)$ constraint allows at most $m$ deadline misses in any $K$ consecutive control instances
~\cite{Hamdaoui_IEEE95,gujarati2019iteration,huang2019exploring}. In fault-tolerant control systems, broader fault types (e.g., sensing, actuation, or system errors) and fault models (e.g., stochastic model) are considered~\cite{jiang-arc12}. However, while these works try to preserve the system safety~\cite{Frehse_RTSS_14,duggirala2015analyzing,huang2019formal} or stability~\cite{lan2016new,song2016adaptive} under \emph{passive} faults, our work considers skipping control operations \emph{proactively} to save resources for control computation and reduce energy for actuation.

Our work is also related to methods on safe reinforcement learning~\cite{junges2016safety,fulton2018safe,alshiekh2018safe}, as we also restrict the possible system actions to a safe set for ensuring safety. The difference is that our approach computes the safe action set by deriving the robust control invariant and backward reachable set of an underlying controller and then leverages the safe set for proactively exploring control skippings.  

\smallskip
\noindent
In summary, this work makes the following novel contributions.
\begin{myitemize}
    \item We develop a novel online intermittent-control framework for opportunistically skipping the computation and actuation of an underlying safe controller to save computational resources and actuation energy. Our framework can be generally applied to various underlying controllers in a discrete linear time-invariant system, and achieves both safety and efficiency.
    \item Our framework ensures safety by developing a formal method to compute a strengthened safe set of the underlying controller. 
    \item Our framework achieves efficiency by developing a model-based approach and a DRL-based approach to decide the skipping choices of the underlying controller under different scenarios. The model-based approach is applied when the underlying controller has an analytic expression and the characteristics of specific operation context and environment are explicitly known; while the DRL approach is applied otherwise.
    \item We demonstrate the effectiveness of our approach through extensive experiments on an ACC case study, using the widely-used SUMO (Simulation of Urban Mobility) simulator~\cite{SUMO2012}.
\end{myitemize}



\section{Problem Formulation}
\label{sec:problem}

We consider a discrete linear time-invariant (LTI) system described by the following difference equation:
\begin{equation} \label{eq:dynamic}
	x(t+1) = Ax(t)+Bu(t)+w(t), \qquad t\geq 0,
\end{equation}
where $x\in \mathcal{R}^n$ is the state variable, $u\in \mathcal{R}^m$ is the control input variable, and $w$ is a bounded perturbation. $A$ and $B$ are transformation matrix. We assume that the state space of the system is $\mathcal{R}^n$. The constraints on the safe state, the control input, and the bounded perturbation are 
\begin{equation} \label{cons:stateinputConstraint}
    x(t) \in X, \qquad u(t) \in U, \qquad w(t) \in W,
\end{equation}
where $X\subset \mathcal{R}^n$, $U\subset \mathcal{R}^m$ and $W\subset \mathcal{R}^k$ are polytopes, and $0 \in X$, $0\in U$, $0\in W$. Note that $X$ represents the set of safe states, and perturbation $w(t)$ captures the characteristics of the specific operation context and environment. 

We assume that such system can be controlled by a safe feedback controller $\kappa$. At each sampling instant $t=0,1,\cdots $, the system reads the state $x(t)$ of the plant and feeds it to the controller $\kappa$ to obtain the control input $u=\kappa(x)$. The new input $u(t)$ will be applied at the next time step. We use 1-norm of the input $\|u(t)\|_1$ to represent the energy cost at each control step. 

A well-designed controller $\kappa$ may ensure the system safety by sustaining any possible perturbation. However, when applying in practice, such design may be over-conservative and energy-consuming. 
Thus, the question is whether we can save actuation energy (and possibly also computational resources) by \emph{skipping} the computation and actuation of control inputs (i.e. using zero inputs) at some steps. This requires effectively leveraging the dynamic characteristics of specific operation context and environment at runtime (reflected via the pattern of perturbation $w(t)$), while guaranteeing system safety.  

We use a binary indicator $z(t)$ at time step $t$ to represent the skipping choice: $1$ denotes actuating the control input computed by the controller $\kappa$, and $0$ denotes skipping the control and applying zero input. Our online opportunistic intermittent-control problem can be formulated as follows. 
%
\begin{problem}[Online Opportunistic Intermittent-Control Problem] \label{pr:adaptation}

    Given a dynamical system defined in Equation~\eqref{eq:dynamic} and a controller $\kappa$ that can ensure system safety, i.e., $x(t)\in X$, the opportunistic intermittent-control problem is to determine the skipping choice variable $z(t)$ at each control instant $t$, such that the system will stay within $X$ and the overall energy cost $\sum_{i=0}^\infty \|u(t)\|$ is minimized.

\end{problem} 

\section{Opportunistic Intermittent-Control Framework} 
\label{sec:approach}


There are two key aspects of our approach: 1) ensuring that the system always stays within the safe state space $X$, and 2) deciding the skipping choice variable $z(t)$ by leveraging the characteristics of perturbation $w$. 

For the first aspect, to ensure system safety, we define three safe state sets of different levels, namely the \emph{original safe set} $X$, the \emph{robust invariant set} $X_I$, and the \emph{strengthened safe set} $X'$, as shown in Figure~\ref{fig:safeLevel}. The original safe set $X$ is the largest of the three, and given by the problem definition. While the system is still safe within this set, there is no guarantee that it will stay within $X$ for the next time step, even with the input from the underlying safe controller $\kappa$. Thus, we consider the robust invariant set $X_I \subseteq X$. When the system is within this set, it is still controllable and can remain in this set by applying the controller $\kappa$. Finally, for considering skipping controls, we define the strengthened safe set $X' \subseteq X_I$. When the system is within this set, it will stay within $X_I$ (and thus controllable and safe) for the next time step, regardless of the skipping choice at the current step (i.e., regardless of whether $z(t)$ is $1$ or $0$). Intuitively, the goal of our framework is to make skipping choices when the system is within the strengthened safe set $X'$, and to apply the underlying safe controller $\kappa$ whenever the system goes out of $X'$ but is still within the robust invariant set $X_I$ ($\kappa$ will be applied until the system goes back to $X'$). Note that the system is guaranteed to never go out of $X_I$ and thus maintain safety.
\begin{figure}[tbp]
	\centering
	\includegraphics[width=0.35\textwidth]{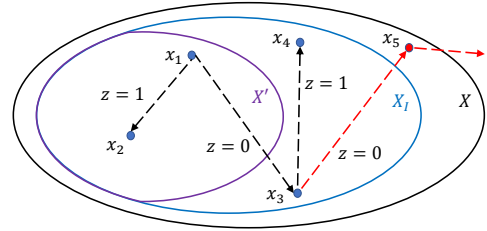}
	\caption{Three safe state sets of different levels: \emph{strengthened safe set} $X'$, \emph{robust invariant set} $X_I$, and \emph{original safe set} $X$. For instance, for $x_1 \in X'$, in the next time step the system may steer to either $x_2$ or $x_3$ that are both within $X_I$ and hence safe and controllable. For $x_3 \in X_I - X'$, the system may steer to $x_4$ in the next step by applying actuation from $\kappa$, which is still controllable. However, if applying zero input at $x_3$, the system may go out of $X_I$ and steer to $x_5 \in X - X_I$. $x_5 $ is a safe state for now. However there is no guarantee that the system will remain within $X$ in the next step, even with the actuation from $\kappa$.}
	\label{fig:safeLevel}
\end{figure}


For the second aspect of our approach, to decide the skipping choices, we develop a model-based approach and a DRL-based approach for different scenarios based on whether the underlying controller has an analytic form and whether the characteristics of the perturbation $w(t)$ is known.

\begin{figure*}[tbp]
	\centering
	\includegraphics[width=0.75\textwidth]{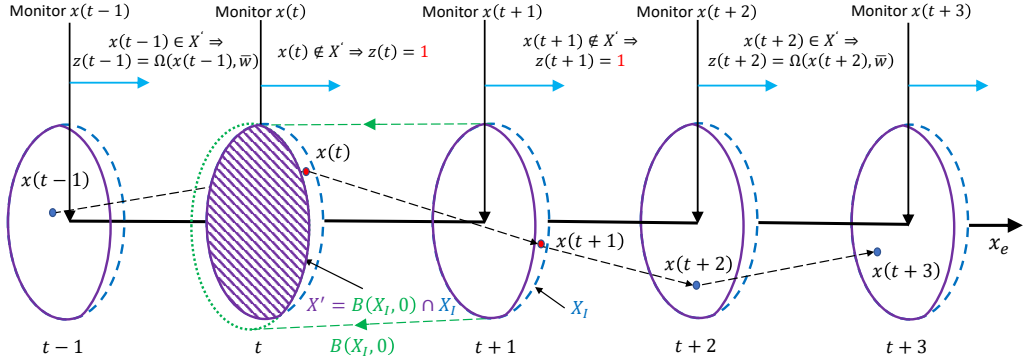}
	\caption{Schematic of our online opportunistic intermittent-control framework. Note that at time step $t-1$ and $t+2$, the system state is in the strengthened safe set $X'$, and thus the skipping choice variable can be freely chosen.}
	\label{fig:strengthensafety}
\end{figure*}

The schematic of our opportunistic intermittent-control framework is shown in Figure~\ref{fig:strengthensafety}. Its flow is shown in Algorithm~\ref{algo:main}. As aforementioned, to ensure safety, we will first compute the robust invariant set $X_I$ and the strengthened safe set $X'$ (lines 1). In our approach, the initial state has to be within $X_I$ (line 2). During operation, at each time step $t$, we will monitor the current state $x(t)$ by collecting sensor inputs. If $x(t)$ is within $X'$, we know that the system will remain within $X_I$ regardless of whether we skip the actuation from $\kappa$ at this step, and we will determine the skipping choice $z(t)$ via a function $\Omega$ that uses the model-based approach or the DRL-based approach (line 6). If $x(t)$ is out of $X'$ (i.e., within $X_I - X'$), we cannot skip the actuation of $\kappa$ and have to set the skipping choice $z(t)$ to $1$ (line 9). Finally, depending on the value of $z(t)$, either the actuation input from $\kappa$ or zero input will be applied (lines 10 to 15).
%
\begin{algorithm}[htbp]
    \SetAlgoLined
    Compute robust invariant set $X_I$ and strengthened safe set $X'$\;
    Initialization: $t \gets 0$, $x(0) \in X_I$\;
    \While{true}{
        Monitor the current state $x(t)$ via sensor inputs\;
        \eIf{$x(t) \in X'$}  {
            $z(t) \gets \Omega(x(t),\bar{w}(t))$\;
        }{
            $z(t) \gets 1$\;
        }
        \eIf{$z(t) = 1$}{
            $u(t) \gets \kappa (x(t))$\;
        }{
            $u(t) \gets 0$\;
        }
        Actuate the control input $u(t)$\;
        $t \gets t + 1$\;
    }
\caption{Opportunistic Intermittent-Control Framework}
\label{algo:main}
\end{algorithm}


The key components in our framework is the computation of $X_I$ and $X'$ for ensuring safety, and the design of the function $\Omega$ for making skipping decisions. Next, we will introduce their details.


\subsection{Ensuring Safety: Computation of Robust Invariant Set $X_I$ and Strengthened Safe Set $X'$}
\label{subsec:safety}

To compute $X_I$ and $X'$, we first introduce the concept of robust control invariant set~\cite{rungger2017computing} and backward reachable set~\cite{asarin2000approximate}.
\begin{definition}[Robust Control Invariant Set]
    Given a discrete dynamical system as defined in Equation~\eqref{eq:dynamic} and a controller $\kappa$, the robust control invariant set of the system is defined as:
    \begin{equation} \label{eq:robustInv}
        X_I = \{x\ |\ \forall w \in W, x \in X_I, \exists x' \in X_I, f(x,\kappa(x),w) = x'\}.
    \end{equation}
    \label{def:robustInv}
\end{definition}
\begin{definition} [Backward Reachable Set]
    Given a discrete dynamical system as defined in Equation~\eqref{eq:dynamic}, a controller $\kappa$ and a set $Y$, the (one-step) robust backward reachable set of the system from $Y$ under a skipping choice variable $z$ is defined as:
    \begin{displaymath}
        B(Y,z) = 
        \begin{cases}
            \{x\ |\ \forall w \in W, x' \in Y, f(x,\kappa(x),w) = x'\}, & z=1, \\
            \{x\ |\ \forall w \in W, x' \in Y, f(x,0,w) = x'\}, & z=0.
        \end{cases}
    \end{displaymath}
    \label{def:backwardReach}
\end{definition}
\begin{definition} [Strengthened Safety Set] Based on Definitions~\ref{def:robustInv} and~\ref{def:backwardReach}, we define the strengthened safe set $X'$ as:
\begin{equation} \label{eq:strengthened}
    X' = B(X_I,0) \cap X_I.
\end{equation}
\end{definition}

\begin{theorem}[Safety of Our Approach] \label{th:soundness}
    Let $X'$ and $X_I$ be defined as Equation \eqref{eq:strengthened} and Equation \eqref{eq:robustInv}, respectively, our framework can ensure the system safety for any skipping decision function $\Omega$.
\end{theorem}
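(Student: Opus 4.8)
The plan is to prove safety by establishing the stronger invariant that $x(t) \in X_I$ for every $t \ge 0$, and then invoking $X_I \subseteq X$ (which holds by the defining property of the robust control invariant set described in Section~\ref{subsec:safety}) to conclude $x(t) \in X$ for all $t$. Since safety is by definition the property $x(t) \in X$ for all $t$, this will suffice. I would carry this out by induction on $t$, with the transition map $f(x,u,w) = Ax + Bu + w$ from Equation~\eqref{eq:dynamic} playing the role of the successor in Definitions~\ref{def:robustInv} and~\ref{def:backwardReach}.

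For the base case, Algorithm~\ref{algo:main} initializes the system with $x(0) \in X_I$ (line 2), so the invariant holds at $t = 0$. For the inductive step, I would assume $x(t) \in X_I$ and show $x(t+1) \in X_I$ for \emph{every} perturbation $w(t) \in W$ and \emph{every} value returned by $\Omega$, by splitting on the two branches of the algorithm. If $x(t) \notin X'$, then line 9 forces $z(t) = 1$, hence $u(t) = \kappa(x(t))$; since $x(t) \in X_I$ and $X_I$ is robust control invariant (Definition~\ref{def:robustInv}), the successor $f(x(t),\kappa(x(t)),w(t))$ lies in $X_I$ for all $w(t) \in W$. If instead $x(t) \in X'$, I would treat both outputs of $\Omega$: when $z(t)=1$ the same invariance argument applies because $X' \subseteq X_I$; when $z(t)=0$ the input is $u(t)=0$, and using $X' \subseteq B(X_I,0)$ from Equation~\eqref{eq:strengthened} together with Definition~\ref{def:backwardReach}, the zero-input successor $f(x(t),0,w(t))$ again lies in $X_I$ for all $w(t) \in W$. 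In all cases the successor remains in $X_I$, closing the induction.

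The argument is essentially a case analysis, so I do not expect a deep technical obstacle; the main subtlety is simply identifying the correct invariant to maintain, namely membership in $X_I$ rather than in $X'$ or merely in $X$. The two factors in the definition $X' = B(X_I,0) \cap X_I$ are both load-bearing: the $\cap\,X_I$ factor lets the $z(t)=1$ branch apply robust control invariance directly, while the $B(X_I,0)$ factor is exactly what certifies the $z(t)=0$ branch, so a correct proof must use both. Finally, the reason the guarantee holds for \emph{any} decision function $\Omega$ is that $\Omega$ is consulted only from states in $X'$ (line 6), where by construction every admissible successor --- under either $\kappa$ or zero input --- is pre-certified to land back in $X_I$; whenever the state leaves $X'$, the algorithm overrides $\Omega$ and forces $z(t)=1$. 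I would state this independence-from-$\Omega$ point explicitly as the concluding remark.
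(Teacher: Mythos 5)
Your proposal is correct and takes essentially the same route as the paper's proof: both rest on exactly the two facts you flag as load-bearing, namely that $X' \subseteq B(X_I,0)$ certifies the zero-input branch and that robust control invariance of $X_I$ certifies every branch where $\kappa$ is applied, with $X_I \subseteq X$ closing the safety claim. If anything, your explicit induction on $t$ is tighter than the paper's informal narrative, which only walks through the single step where the state first exits $X'$ and leaves the case of consecutive steps spent in $X_I - X'$ implicit in the remark that the system ``will remain in $X_I$ under $\kappa$.''
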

\begin{proof}
    Since $X' = B(X_I,0) \cap X_I \subset X$, the system is safe if $x \in X'$. Assume that at time step $t$, the monitor observes that the system goes out of the region $X'$, i.e. $x(t) \in X_I - X'$ (see Figure~\ref{fig:strengthensafety}). While at the last time step $t-1$, the system was still in $X'$, i.e. $x(t-1) \in X' = B(X_I,0) \cap X_I = B(X_I,0) \cap B(X_I,1)$. By the definition of backward reachable set, we know that $x(t) \in X_I$. Thus, based on the definition of robust controlled invariant, the system is controllable and will remain in $X_I$ under $\kappa$. This shows that our framework (Algorithm~\ref{algo:main}) will always maintain system safety. 
\end{proof}

\smallskip
\noindent
\textbf{Computing $X_I$ and $X'$.} 
We first consider the computation of the robust control invariant set $X_I$. For linear feedback control, namely $\kappa(x) = Kx$, the robust invariant can be computed as \cite{rakovic2005invariant}:
\begin{displaymath}
    X_I = \alpha (W \oplus (A+BK)W \oplus \cdots \oplus (A+BK)^n W),
\end{displaymath}
where $\alpha$ and $n$ ares hyper-parameters. $\oplus$ denotes the Minkowski sum.

For more advanced control algorithms, we take robust model predictive control (RMPC) as an example, which considers the nominal system model and tightened constraints~\cite{chisci2001systems,richards2005robust}. Given a system state $x(t)$, RMPC solves the following optimization:
\begin{equation} \label{eq:RMPC}
	\begin{aligned}
		& \quad J(x(t))\triangleq
		\min_{\bar{u}_N}\sum_{k=0}^{N-1}P\|x(k|t)\|_1 + Q\|u(k|t)\|_1\\
		\text{s.t.\ } & 
		\begin{cases}
			x(k+1|t) = Ax(k|t)+Bu(k|t), \quad 0 {\leq} k{\leq} N{-}1, \\
			x(k|t) \in X(k), \quad 0 {\leq} k{\leq} N, \\
			u(k|t) \in U, \quad 0 {\leq} k{\leq} N{-}1, \\
			x(0|t) = x(t), \quad x(N|t) \in X_t,
		\end{cases}
	\end{aligned}
\end{equation}
where $u(k|t)$ and $x(k|t)$ denote the input and state of $t+k$ predicted at $t$ respectively. $\bar{u}_H = u(0|t), \cdots , u(N-1|t)$, $P$ and $Q$ are the weights of the state cost and energy cost, respectively. $X_t$ is the terminate set to ensure stability~\cite{mayne2000constrained}, and the tightened constraints $X(k)$ is defined recursively as follows.
\begin{displaymath}
    \begin{gathered}
        X(0) = X, \\
        X(k) = \{x\ |\ x\in X(k{-}1)\ \wedge \ x \oplus A^{k-1} W \subseteq X(k{-}1)\}, \ \ k\geq 1.
    \end{gathered}
\end{displaymath}
Let $\bar{u}^*_N$ be the optimal solution. Then the first sample of $\bar{u}^*_N$ will be applied for actuation: $\kappa(x(t)) = u^*(0|t)$.
We leverage the idea of explicit MPC~\cite{tondel2003algorithm} to first analyze the feasible region of MPC, namely, the state set where MPC optimization is feasible. Assuming the feasible set of the given RMPC is $X_F$, we have:
\begin{proposition}
    The feasible set $X_F$ of the RMPC is also its robust control invariant set (i.e., $X_I = X_F$), if the terminal set $X_t$ ensures stability, that is, there exists a robust local controller $\kappa_L$ such that $\kappa_L(x) + w \in X_T$, $\forall x \in X_T, w\in W$. 
\end{proposition}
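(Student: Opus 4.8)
The plan is to prove the two set inclusions $X_I \subseteq X_F$ and $X_F \subseteq X_I$ separately, reading the self-referential Definition~\ref{def:robustInv} as characterizing $X_I$ as the maximal set that is robustly positively invariant under the closed-loop map $x \mapsto f(x,\kappa(x),w)$ induced by the RMPC law $\kappa(x) = u^*(0|t)$.

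The inclusion $X_I \subseteq X_F$ is essentially definitional. The control law $\kappa$ is only defined at states where the optimization in Equation~\eqref{eq:RMPC} admits a feasible solution, i.e., precisely on $X_F$. Since the invariance condition in Equation~\eqref{eq:robustInv} evaluates $\kappa(x)$ at every $x \in X_I$, the set $X_I$ cannot extend beyond the domain of $\kappa$, so $X_I \subseteq X_F$.

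The substance of the argument is the reverse inclusion $X_F \subseteq X_I$, which I would establish by showing that $X_F$ is itself robustly positively invariant, i.e., for every $x \in X_F$ and every $w \in W$ the successor $f(x,\kappa(x),w) = Ax + B\kappa(x) + w$ again lies in $X_F$ (recursive feasibility); since $X_F$ is exactly the domain of $\kappa$, this makes it the maximal such invariant set and forces $X_F \subseteq X_I$. Fix $x = x(t) \in X_F$ with feasible input sequence $u^*(0|t),\dots,u^*(N-1|t)$ and nominal states $x^*(0|t),\dots,x^*(N|t)$. Applying $\kappa(x) = u^*(0|t)$ under disturbance $w$ gives true next state $x(t+1) = x^*(1|t) + w$. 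As a feasibility certificate at $x(t+1)$ I would use the shifted sequence $u(k|t+1) = u^*(k+1|t)$ for $0 \le k \le N-2$, appended with the local action $u(N-1|t+1) = \kappa_L(\tilde{x}(N-1|t+1))$; by linearity the induced nominal trajectory is $\tilde{x}(k|t+1) = x^*(k+1|t) + A^k w$, so the disturbance enters only through the propagated term $A^k w$.

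The main obstacle, and the place where the hypotheses do the real work, is verifying that this candidate satisfies all tightened constraints. Here I would invoke the recursive definition of $X(k)$: feasibility at time $t$ gives $x^*(k+1|t) \in X(k+1)$, and since $X(k+1) = \{x : x \in X(k) \wedge x \oplus A^k W \subseteq X(k)\}$, we obtain $x^*(k+1|t) + A^k w \in X(k)$ for every $w \in W$, hence $\tilde{x}(k|t+1) \in X(k)$ for $0 \le k \le N-1$. For the terminal piece, the stability hypothesis on $X_t$ (robust invariance under a local controller $\kappa_L$ that respects the input constraint) closes the argument by ensuring both $u(N-1|t+1) \in U$ and $\tilde{x}(N|t+1) \in X_t$. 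This exhibits a feasible solution at $x(t+1)$, so $x(t+1) \in X_F$, establishing invariance and thus $X_F \subseteq X_I$; combined with the first inclusion this yields the claimed identity $X_I = X_F$. The delicate bookkeeping is entirely in this telescoping constraint-tightening step, where each propagated perturbation $A^k w$ must be absorbed exactly by the margin built into $X(k+1)$ relative to $X(k)$.
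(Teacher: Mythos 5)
Your proposal follows essentially the same route as the paper's proof: both establish recursive feasibility of the RMPC by shifting the optimal input sequence one step and appending an action of the local terminal controller $\kappa_L$, and both rely on the tightened constraints $X(k)$ to absorb the propagated disturbance. In fact you are more explicit than the paper on two points it glosses over. First, you argue the inclusion $X_I \subseteq X_F$ separately (reading $X_I$ as the maximal invariant set, which is needed to turn ``$X_F$ is robust control invariant'' into the asserted equality $X_I = X_F$); the paper only proves invariance of $X_F$ and then asserts equality. Second, you spell out the telescoping verification $x^*(k{+}1|t) \in X(k{+}1) \Rightarrow x^*(k{+}1|t) \oplus A^k W \subseteq X(k)$, hence $\tilde{x}(k|t{+}1) = x^*(k{+}1|t) + A^k w \in X(k)$; this is the heart of the matter, and the paper's proof asserts feasibility of the constructed sequence without displaying this verification at all.

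The one place you diverge is the terminal stage, and there your variant is harder to justify than the paper's. You append $u(N{-}1|t{+}1) = \kappa_L(\tilde{x}(N{-}1|t{+}1))$ with $\tilde{x}(N{-}1|t{+}1) = x^*(N|t) + A^{N-1}w$, but the stated hypothesis constrains $\kappa_L$ only on $X_T$, and $\tilde{x}(N{-}1|t{+}1)$ is only known to lie in $X(N{-}1)$, not in $X_T$, so the hypothesis cannot be invoked at that point as you claim. The paper instead appends $\kappa_L(x^*(N|t))$ with $x^*(N|t) \in X_T$, which keeps the argument inside the hypothesis's domain; its own residual gap is that the disturbance reaching the terminal stage is $A^N w$ rather than $w$, so a fully rigorous proof needs a Chisci-style terminal condition (robust invariance of $X_T$ under $\kappa_L$ with respect to the propagated set $A^N W$, or an extra assumption like $A^N W \subseteq W$), together with $\kappa_L(X_T) \subseteq U$ and $X_T \subseteq X(N)$ --- requirements you rightly flag but the proposition's hypothesis never states. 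To repair your version, evaluate $\kappa_L$ at the nominal terminal state $x^*(N|t)$ as the paper does; the rest of your argument then goes through at (indeed, above) the level of rigor of the paper's proof.
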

\begin{proof}
    Given any $x(t) \in X_F$, by the definition of $X_F$, we know that the optimization problem of $J(x(t))$ is feasible and let the optimal solution be $\bar{u}^*_N$. We also let the corresponding optimal value of $x(1|t), \cdots , x(N|t)$ be $x^*(1|t), \cdots , x^*(N|t)$. By the methodology of MPC, we know that the system will steer to $x(t+1)=x^*(1|t)+w(t) \in X(0)$. Now we consider the feasibility of the optimization problem $J(x(t+1))$. We construct a control sequence by combining part of the optimal value of $\bar{u}^*_N$ and the local controller $\kappa_L$, namely $u^*(1|t), \cdots , u^*(N-1|t), \kappa_L(x^*(N|t))$. We have
    This control sequence is feasible for $J(x(t+1))$. Therefore we have $x^*(1|t) {\in} X_F$, which means $X_F$ is the robust control invariant set.
\end{proof}

After obtaining the robust control invariant $X_I$ for the RMPC, we can compute the backward reachable set $B(X_I,0)$ using the following formula if $A$ is invertible:
\begin{displaymath}
    B(X_I,0) = A^{-1}(X_I \ominus W),
\end{displaymath}
where $\ominus$ denotes the Minkowski difference.


\subsection{Achieving Efficiency: Design of Skipping Decision Function $\Omega$}
\label{subsec:efficiency}


As aforementioned, we develop a model-based approach and a DRL-based approach for making the skipping decisions (function $\Omega$ in Algorithm~\ref{algo:main}), with the system safety guaranteed by Theorem~\ref{th:soundness}.

\subsubsection{Model-based Approach}

If the control law can be represented as an analytic expression and the perturbation is known, i.e. $w(t)$ is known for any $t\geq 0$, we can develop a model-based approach for optimizing the skipping choice made in function $\Omega$. Specifically, at each time step $t$, 
we solve the following finite-time optimization problem:
\begin{equation} \label{eq:finite}
	\begin{aligned}
		& \qquad \qquad \min_{\bar{z}_H,\bar{u}_H,\bar{x}_H}\sum_{k=0}^{H-1}\|u(k|t)\|_1\\
		\text{s.t.\ } & 
		\begin{cases}
			x(k+1|t) {=} f(x(k|t),u(k|t),w(k|t)), \ 0 {\leq} k{\leq} H{-}1, \\
			x(k+1|t) \in X', \ \ u(k|t) \in U, \quad 0 {\leq} k{\leq} H{-}1, \\
			u(k) \in
			\begin{cases}
			    \kappa(x(k)), & z(t)=1, \\
			    0, & z(t)=0.
			\end{cases}
			\quad 0 {\leq} k{\leq} H{-}1, \\
			x(0|t) = x(t),
		\end{cases}
	\end{aligned}
\end{equation}
where $\bar{z}_H = z(0|t), \cdots , z(H-1|N)$, $\bar{u}_H = u(0|t), \cdots , u(H-1|N)$, and $\bar{x}_H = x(0|t), \cdots , x(H|N)$. Let $\bar{z}^*_H$ be the optimal solution. Then the first sample of $\bar{z}^*_H$ will be applied as the skipping decision, i.e., $\Omega(x(t)) = z^*(0|t)$.

\begin{remark}
    The above optimization formulation~\eqref{eq:finite} is in fact similar to MPC \cite{huang2016hierarchical}, as it is also based on the idea of deciding present action by predicting long-term behavior. However, note that our optimization does not encode terminal constraint as in~\cite{mayne2000constrained} as stability does not need to be considered.
\end{remark}



\subsubsection{DRL-based Approach}

Most advanced control schemes such as the MPC cannot be simply represented as an analytic expression.  
Furthermore, it is often impossible to know the perturbation $w(t)$ that reflects the specific operation context and environment a prior. In such case, we develop a machine learning approach for the skipping decision function $\Omega$ to learn and leverage the underlying perturbation pattern. Since there is typically a lack of labelled data for such systems in practice, we use a deep reinforcement learning (DRL) based approach rather than supervise learning. Specifically, we design the following DRL agent for $\Omega$.

\smallskip
\noindent
\textbf{Actions.}
The DRL agent generates two types of actions, $0$ or $1$, representing the cases of $z(t) = 0$ or $z(t) = 1$ in skipping decision.

\smallskip
\noindent
\textbf{State.} 
The action of $\Omega$ depends on not only the observation of current system state, but also the information of past perturbations. Thus, we define a hyper-parameter $r$, which represents the memory length of the perturbations. The state for the DRL agent is the set as $s(t) = \{x(t), w(t-r+1), \cdots , w(t)\}$.

\smallskip
\noindent
\textbf{Reward (Penalty) function.} 
Typically, if the system goes out of the strengthened safe set $X'$ frequently, more non-zero inputs computed by the underlying controller $\kappa$ need to be applied and there is less energy saving. Thus, the goal of the DRL agent should include both minimizing the energy cost $\sum{\|u(t)\|_1}$ directly and maintaining the system within the strengthened safe set $X'$. Following this idea, we design the reward function $R(s_1,z,s_2)$ of DRL with respect to the agent predecessor state $s_1$, the action $z$ and the successor state $x_2$, with consideration of both objectives:
\begin{displaymath}
    \begin{gathered}
        R(s_1,z,s_2) = - w_1 \cdot R_1 - w_2 \cdot R_2, \\
        R_1 = 
        \begin{cases}
             0, & x_2 \in X',\\
             1, & x_2 \in {{X_I}-{X'}},
        \end{cases} \ \
        R_2 = \begin{cases}
             0, & z {=} 0 \wedge x_1 {\in} X', \\
             \|\kappa(x_1)\|_1, & others.
        \end{cases}
    \end{gathered}
\end{displaymath}
where $R_1$ and $R_2$ are the reward for maintaining the system state $x$ in $X'$ and the reward for the current energy cost, respectively. 
$w_1$ and $w_2$ are the weights for $R_1$ and $R_2$.


\smallskip
\noindent
\textbf{Learning process.}
Our DRL agent interacts with the underlying controller $\kappa$ at each time step. For convenience, we let $w(-r+1), \cdots , w(-1)$ be $0$. The DRL agent starts with the initial state $s(0) = \{x(0), w(-r+1), \cdots , w(0)\}$, and generates the first action $z(0)$.  
If $z(0) = 1$ and $x_1 {\in} X'$, the underlying controller $\kappa$ will be applied to compute the control input $u(0) = \kappa(x(0))$; otherwise, we let $u(0) = 0$. Once $u(0)$ is applied, the next system state $x(1)$ is generated. We can then 
calculate the reward and run the policy network based on the reward, and then obtain the action $z(1)$ for the next step.  
We repeat this until we reach the maximum steps, which is a hyper-parameter set in advance.

Note that when the DRL agent drives the system state out of the strengthened safe set $X'$ (i.e., into $X_I - X'$), we will apply the underlying controller $\kappa$ to ensure system safety. The DRL agent will also receive a large penalty in such case (as defined above in the reward function), so it can be motivated to keep the system state within $X'$ for more efficient control and better learning of the perturbation pattern.

\section{Experimental Results on an ACC Case Study} \label{sec:experiments}

\begin{figure}[tbp]
	\centering
	\includegraphics[width=0.43\textwidth]{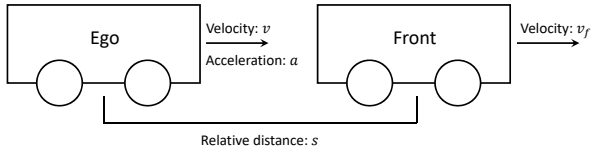}
	\caption{Schematic view of a cruise control scenario.}
	\label{fig:ACC}
\end{figure}

In our experiments, we conduct an extensive case study on an adaptive cruise control (ACC) system~\cite{duggirala2015analyzing,tiwari2003approximate}.
As shown in Figure~\ref{fig:ACC}, there are two vehicles \emph{Ego} and \emph{Front} driving on the road. The \emph{Front} vehicle is moving at a velocity $v_f$ (which may change over time). We are able to control the \emph{Ego} vehicle by tuning its acceleration with a velocity-related resistance. Let $s$ be the relative distance between \emph{Ego} and \emph{Front}, and $v$ be the velocity of \emph{Ego}. The system dynamics follow the standard Newton's laws of motion:
\begin{displaymath}
    \left\{
    	\begin{array}{lcl}
    		s(t+1) & = & s(t)-(v(t)-v_f(t))\delta, \\
    		v(t+1) & = & v(t)-(kv(t) - u(t))\delta,
    	\end{array}
	\right.
\end{displaymath}
where $(s,v)$ forms the state variable and $u$ is the control input variable. $\delta = 0.1$ is the sampling/control period, and $k=0.2$ is the drag coefficient. 
The velocity of the \emph{Front} vehicle is within the range $v_f \in [30,50]$. 
The ACC system tries to maintain the distance between the two vehicles within a safe range $s \in [120,180]$ for any possible $v_f$.
The \emph{Ego} vehicle has constraints on its velocity and actuation/control input: $v \in [25,55], u \in [-40,40]$.
The control input $u$ is computed by the aforementioned RMPC $\kappa_R$~\cite{chisci2001systems} with the prediction horizon set to $10$. 

As mentioned in Section~\ref{sec:approach}, for an advanced controller as RMPC, we use double deep Q learning \cite{van2016deep} to design the skipping decision function $\Omega$. The hyper-parameters used in DRL are set as follows. The perturbation memory length $r=1$. The weights in the reward function $w_1=0.01$ and $w_2=0.0001$. The training and testing of all the experiments are performed on a desktop with 4-core 3.60 GHz Intel Core i7 and NVIDIA GeForce GTX TITAN. The system is simulated in the SUMO simulator~\cite{SUMO2012}. We evaluate the fuel consumption of $100$ time steps.

\subsection{Overall Effectiveness of Our Approach}


We compare our opportunistic intermittent-control approach against the traditional approach of only using the underlying RMPC controller, and against an intuitive bang-bang control scheme based on our framework. The bang-bang scheme uses the same computation of $X_I$ and $X'$ in Section~\ref{subsec:safety} to ensure system safety, but uses a simple strategy (instead of DRL) for deciding skipping choices. 
Specifically, it applies zero control input whenever the system state is within the strengthened safe set $X'$, and applies the input from RMPC once it is not in $X_I-X'$ :
\begin{equation}
    u(x) = 
    \begin{cases}
         0 & x \in X',\\
         \kappa_{R}(x) & x \in X-X'.\\
    \end{cases}
\end{equation}

In this experiment, we assume that the front vehicle is driving under a sinusoidal velocity variation pattern with a minor disturbance. Specifically, 
\begin{equation}\label{equ:sin_distr}
    v_f(t) = v_e + a_f sin(\frac{\pi}{2}\delta t)+w,
\end{equation}
where $v_e=40$, $a_f=9$, and the random disturbance $w\in[-1, 1]$. We conducted experiments on 500 cases with randomly generated initial state $x(0)$. 

We first compare the fuel consumption of the three approaches (DRL-based opportunistic intermittent-control, bang-bang control, and RMPC only). The fuel consumption data are from SUMO simulations, and directly reflect the actuation energy cost as defined in our Problem~\ref{pr:adaptation}. Figure~\ref{fig:acc_example_3} shows the fuel consumption savings of our DRL-based opportunistic intermittent-control approach and the bang-bang control approach over the traditional method of only using RMPC. The x-axis is the range of fuel consumption saving, and the y-axis shows how many cases (out of the 500) falls into each range for the two approaches. We can clearly see that 1) both the DRL-based opportunistic intermittent-control and the bang-bang control achieve significant savings, showing the potential of skipping controls when possible; 2) the DRL-based opportunistic intermittent-control achieves substantially more savings than the bang-bang control, showing the effectiveness of our DRL-based approach in learning the perturbation pattern and intelligently deciding the skipping choices. Overall, compared with RMPC, the average fuel consumption of bang-bang control is reduced by $16.28\%$, while \textbf{the average fuel consumption of our DRL-based opportunistic intermittent-control is reduced by $23.83\%$}. 

%
\begin{figure}[tbp]
	\centering
	\includegraphics[width=0.44\textwidth]{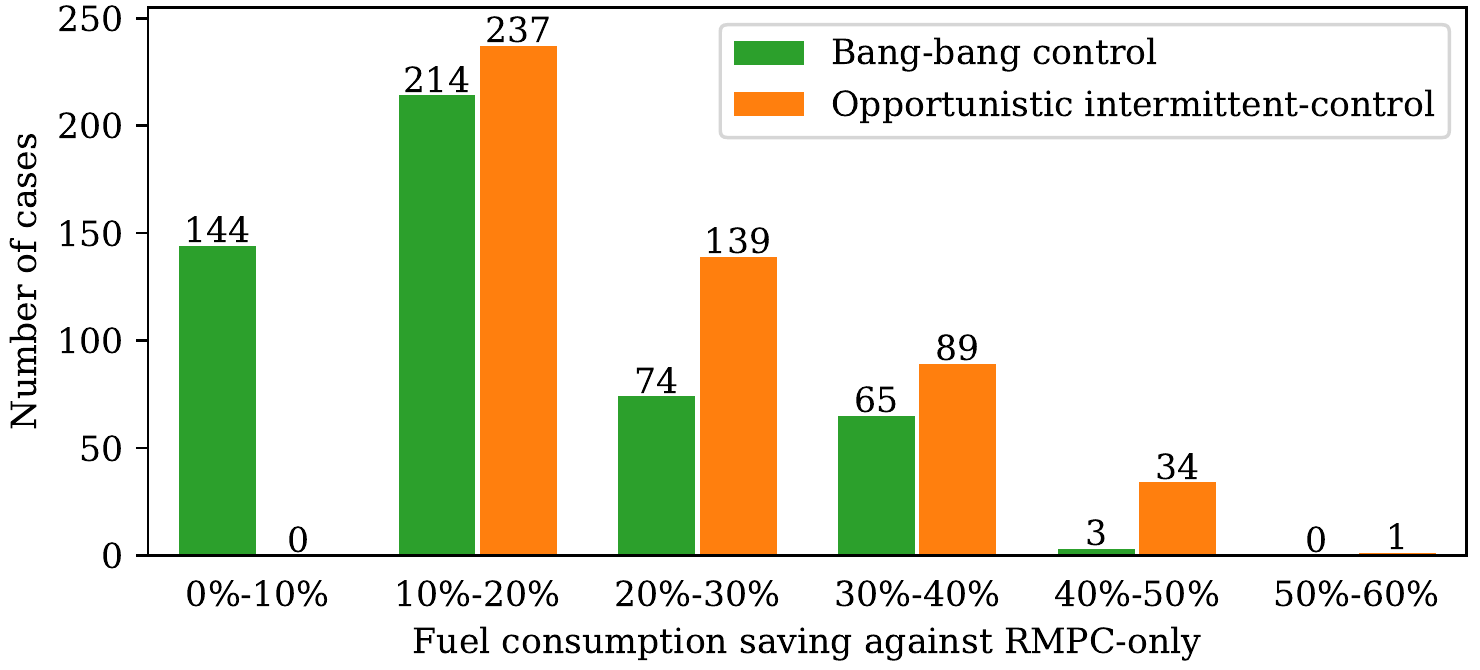}
	\caption{Fuel consumption comparison on 500 test cases. X-axis shows the range of fuel consumption savings of our DRL-based opportunistic intermittent-control and bang-bang control over the traditional method of only using RMPC. The y-axis shows how many cases (out of the 500) falls into each range for the two approaches. 
}
	\label{fig:acc_example_3}
\end{figure}


We also measured the computational savings from skipping the RMPC control computation. For our DRL-based opportunistic intermittent-control approach, the computation time for checking the satisfaction of strengthened safe set $X'$ and invoking the neural network to decide skipping choice $z$ is in average 0.02 second; while the average computation time for RMPC is 0.12 second. In our experiments, out of 100 steps, the average number of steps that skip the RMPC computation is 79.4. Thus, overall, there is around $60\%$ saving in computation time from our approach (i.e., $(0.12 \times 100 - (0.02 \times 100 + 0.12 \times (100 - 79.4)))/(0.12 \times 100)$).


\subsection{Impact Analysis under Different Driving Scenarios}

We further conducted a series of experiments to evaluate our approach under different driving scenarios, particularly when the \emph{Front} vehicle exhibits different driving patterns. We are interested to see whether our approach can effectively learn those patterns and leverage them in achieving energy savings.



\smallskip
\noindent
\textbf{Impact of velocity range of \emph{Front} vehicle.}
We first analyze how different velocity range of the \emph{Front} vehicle may affect the performance of our approach. We conduct 5 experiments Ex.1 -- Ex.5, and the range of $v_f$ of these experiments are shown in Table~\ref{tab:set1}.
We also restrict the \emph{Front} vehicle acceleration $v'_f$ with a bounded range $[-20,20]$. For each experiment, we test 500 cases by randomly picking feasible initial states within $X'$ and random front car acceleration $v'_f$ at each time step within $[-20,20]$. 

\begin{table}[htbp]
  \centering
  \caption{$V_f$ setting for Ex.1 -- Ex. 5}
    \begin{tabular}{cccccc}
     \hline
          & Ex. 1 & Ex. 2 & Ex. 3 & Ex. 4 & Ex. 5 \\
     \hline
    Range of $v_f$ & $[30,50]$ & $[32.5,47.5]$ & $[35,45]$ & $[38,42]$ & $[39,41]$ \\
     \hline
    \end{tabular}%
  \label{tab:set1}%
\end{table}%

The experimental results are shown in Figure~\ref{fig:jitter}. We can observe that when the range of $v_f$ becomes smaller, our DRL-based opportunistic intermittent-control approach achieves more fuel consumption savings over only using RMPC. This is because that a smaller range of $v_f$ is easier for DRL to learn and leverage.
%
\begin{figure}[tbp]
	\centering
	\includegraphics[width=0.38\textwidth]{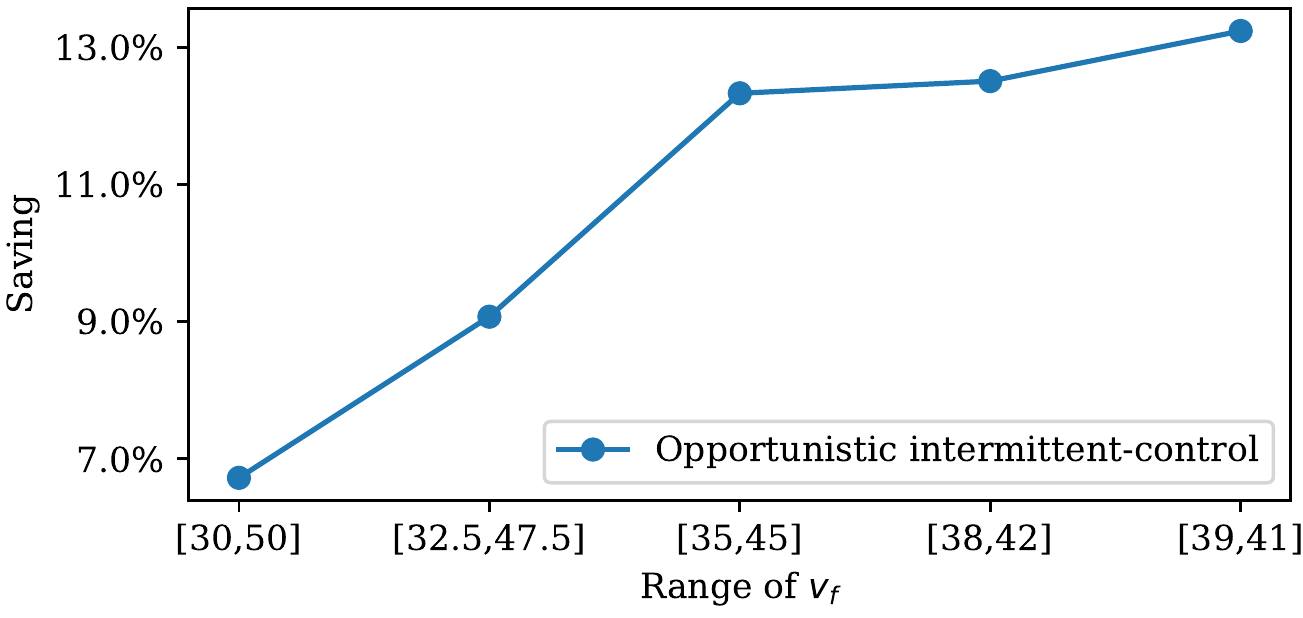}
	\caption{Fuel consumption savings by our DRL-based opportunistic intermittent-control over RMPC only under different range of $v_f$.}
	\label{fig:jitter}
\end{figure}

\medskip
\noindent
\textbf{Impact of velocity regularity of \emph{Front} vehicle.} 
We then conduct experiments to evaluate the impact of the regularity of the \emph{Front} vehicle velocity, i.e., how ``random'' the \emph{Front} vehicle changes its speed. We conduct experiments Ex.6 -- Ex.10, which share the same range of $v_f$ at $[30, 50]$ but with the following differences.
\begin{itemize}
    \item   In Ex.6, $v_f$ changes completely random, i.e., a drastic change is allowed instantly;
    \item   Ex.7 shares the same setting with Ex.1, i.e., the velocity can only change continuously;
    \item   In Ex.8, the velocity changes based on Equation~\eqref{equ:sin_distr} but with a large random disturbance. Specifically, the amplitude $a_f=5$ and the range of $w$ is $[-5,5]$;
    \item  The setting of Ex.9 is similar to Ex.8, with more regularity, i.e., a larger amplitude $a_f=8$ and a smaller disturbance range $[-2,2]$;
    \item  The setting of Ex.10 is similar to Ex.8 and Ex.9, with even more regularity: i.e., $a_f=9$ and the disturbance range is $[-1,1]$.  
\end{itemize}

Intuitively, from Ex.6 to Ex.10, the \emph{Front} vehicle velocity exhibits more regularity. The experiments results are shown in Figure~\ref{fig:chaos}. From Ex.7 to Ex.10, we can see that better regularity leads to easier learning of our DRL-based approach and more fuel consumption savings. However, Ex.6 is an exception, where $v_f$ is purely random but our approach still achieves significant saving. We speculate that this phenomenon is due to the low performance of the RMPC itself, since there would be a mismatch between the real state and what RMPC predicts.
%
\begin{figure}[tbp]
	\centering
	\includegraphics[width=0.38\textwidth]{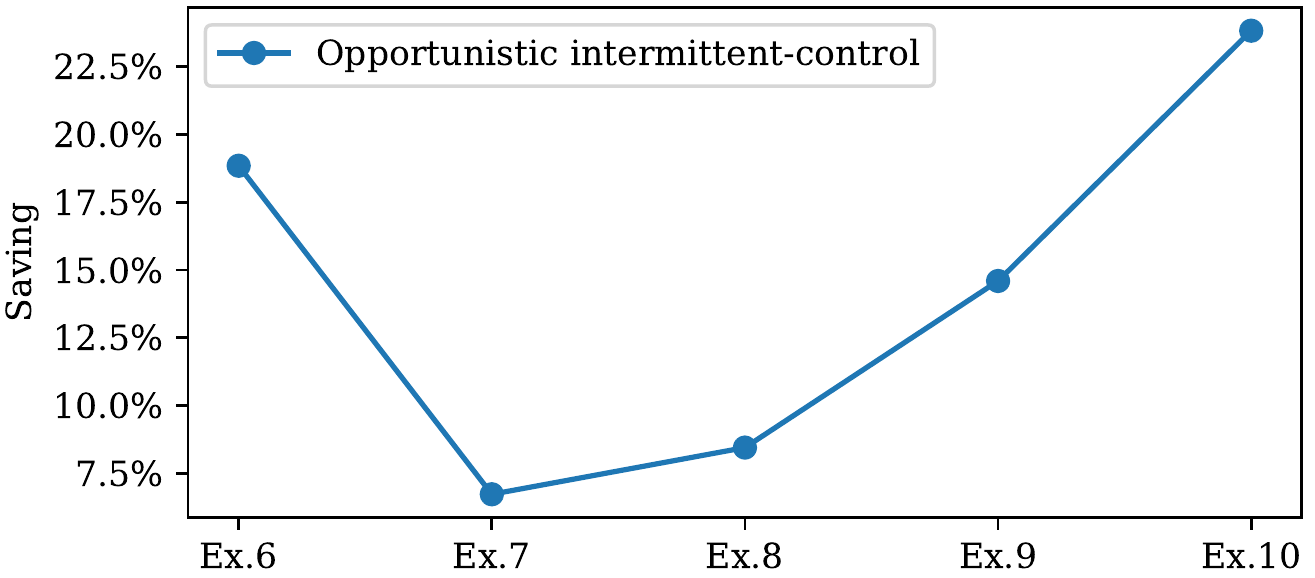}
	\caption{Fuel consumption savings by our DRL-based opportunistic intermittent-control under different regularity degree of $v_f$.}
	\label{fig:chaos}
\end{figure}


\section{Conclusion} 
\label{sec:conclusion}

It is often overly conservative to design safety-critical systems with only the worst-case situations in mind.
In this paper, we propose a novel online intermittent-control framework to opportunistically skip the computation and actuation of an underlying safe controller via learning the knowledge of the specific operation context and environment. The framework utilizes both a model-based approach (mixed integer programming formulation) and a learning-based approach (deep reinforcement learning) for different situations to intelligently make the skipping decisions. It also guarantees system safety by formally computing a strengthened safe set based on the notion of robust control invariant and backward reachable set of the underlying controller. 
Experiments on an adaptive cruise control system demonstrate the effectiveness of our approach in achieving significant energy and computation savings. Future work includes addressing more complex control systems beyond LTI.

\bibliographystyle{IEEEtran}
\bibliography{IEEEabrv,./references_qi,./bib_zwang,./wenchao,./hyoseung,./Shichao,./chao}

\end{document}